\newtheorem{prop}{Proposition}
\title{Asymmetric prior in wavelet shrinkage}
\author{Alex Rodrigo dos Santos Sousa \\ University of São Paulo, Brazil \\ alex.sousa89@usp.br }
\begin{document}

\numberwithin{equation}{section}
\numberwithin{table}{section}
\numberwithin{figure}{section}

 \maketitle
    \begin{abstract}
        
In bayesian wavelet shrinkage, the already proposed priors to wavelet coefficients are assumed to be symmetric around zero. Although this assumption is reasonable in many applications, it is not general. The present paper proposes the use of an asymmetric shrinkage rule based on the discrete mixture of a point mass function at zero and an asymmetric beta distribution as prior to the wavelet coefficients in a non-parametric regression model. Statistical properties such as bias, variance, classical and bayesian risks of the associated asymmetric rule are provided and performances of the proposed rule are obtained in simulation studies involving artificial asymmetric distributed coefficients and the Donoho-Johnstone test functions. Application in a seismic real dataset is also analyzed. 
    \end{abstract}

\section{Introduction}
Wavelet-based methods have been extensively studied and applied in several areas, such as mathematics, signal and image processing, geophysics, genomics and many others. In statistics, applications of wavelets arise mainly in the areas of non-parametric regression, density estimation, functional data analysis and time series analysis. In non-parametric regression, the focus of this work, an unknown function is expanded as linear combination of wavelet basis and the coefficients of this representation are estimated. The use of wavelets representation is attractive in non-parametric regression due their well localized and sparse wavelet coefficients, i.e, the coefficients are typically non-zero or significant on positions where the function has important characteristics to be recovered, as peaks, cusps and discontinuities for example and are zero or very close to zero on smooth regions of the function. These features of wavelets provide computational and analytical advantages. More about wavelet methods in statistics can be seen in Vidakovic (1999). 

Due the sparsity property of wavelet coefficients, shrinkage and thresholding methods are generally used to estimate them in the wavelet domain by reducing the magnitude of the observed (empirical) coefficients obtained by application of a discrete wavelet transformation on the original data. There are in fact several shrinkage and thresholding techniques available in the literature. The main works in this area are of Donoho (1995a, 1995b), Donoho and Johnstone (1994a, 1994b, 1995), but also Donoho et al. (1995, 1996), Vidakovic (1998), Antoniadis et al. (2002) and Johnstone and Silverman (2005) can be cited. For more details of shrinkage methods, see Vidakovic (1999) and Jansen (2001). 

Bayesian shrinkage methods have also been studied, mainly for the possibility of adding, by means of a prior probabilistic distributions,  prior information about the regression, coefficients and other parameters to be estimated. Specifically in the case of wavelets, information about the degree of sparsity of the coefficient vector, the support of these coefficients (if they are limited), among others can be incorporated into the statistical model of study by means of bayesian procedures. In this sense, the choice of the prior distribution of the wavelet coefficients is extremely important to achieve meaningful results. Several bayesian shrinkage procedures have been proposed in the last years in many statistical fields. Some of them are found in Lian (2011), Berger et al. (2012), Karagiannis et al. (2015), Griffin and Brown (2017) and Torkamani and Sadeghzadeh (2017). Further, priors models in the wavelet domain were proposed since 1990s, as Chipman et all (1997), Abramovich and Benjamini (1996), Abramovich et al. (1998), Vidakovic (1998), Vidakovic and Ruggeri (2001), Angelini and Vidakovic (2004), Johnstone and Silverman (2005), Reményi and Vidakovic (2015), Bhattacharya et al. (2015), Sousa (2020), among others.

One feature of the priors already proposed to the wavelet coefficients is that they are symmetric around zero. Athough these priors have been well suceeded in many real applications, this symmetry assumption is not a general case, i.e, wavelet coefficients can be asymmetrically distributed. In this sense, the proposition of an asymmetric prior distribution could be welcome for better estimation of asymmetrically distributed wavelet coefficients. Moreover, little attention has been given to bounded priors, which can be important to model bounded energy signals denoising, restricted to the proposition of the uniform and Bickel distributions by Angelini and Vidakovic (2004), although bounded energy signals occur in practice. Motivated by these reasons, we propose in this work an asymmetric prior distribution to wavelet coefficients based on a discrete mixture of a point mass function at zero and the beta distribution with support on $[-m,m]$. The novelty of this study is, therefore, the application of an asymmetric shrinkage rule associated with this prior model to estimate asymmetrically distributed wavelet coefficients of the unknown function to be recovered in a non-parametric regression model.    

The use of the asymmetric beta prior is interesting by several reasons. First of all, it has a well known shape flexibility obtained by convenient choices of its hyperparameters. Further, its hyperparameters have direct and easy interpretations in terms of asymmetry (left and right asymmetry choices) and shrinkage level, which are very useful for their elicitations by practitioners. Finally, the shrinkage rule under asymmetric beta prior outperformed, in terms of averaged mean squared error, the considered shrinkage/thresholding methods in our simulations studies, mainly when the coefficients present high asymmetry. 

This paper is organized as follows: Section 2 defines the model and the proposed asymmetric beta prior, Section 3 establishes the associated shrinkage rule, shows statistical properties of the rule, such as variance, bias and risks. Parameter and hyperparametes elicitations are discussed in Section 4. Simulation studies involving artificial asymmetric coefficients and the so called Donoho-Jonhstone test functions to evaluate performances are done in Section 5 and application of the proposed shrinkage rule in a real seismic dataset is done in Section 6. The paper finishes with conclusion and final considerations in Section 7.

\section{Statistical model}

Let is consider the non-parametric regression problem of the form
\begin{equation}\label{eq:modeltime}
y_i = f(x_i) + e_i , \qquad i=1,...,n=2^J, J \in \mathbb{N},
\end{equation}
\noindent where $x_i \in [0,1]$, $i=1,...,n$, $f \in \mathbb{L}_2(\mathbb{R})= \{f:\int f^2 < \infty\}$, and $e_i$, $i=1,...,n$, are zero mean independent normal random variables with unknown variance $\sigma^2$. In vector notation, we have
\begin{equation}\label{modeltimevec}
\boldsymbol{y} = \boldsymbol{f} + \boldsymbol{e},
\end{equation}

\noindent where $\boldsymbol{y} = (y_1,...,y_n)'$, $\boldsymbol{f} = (f(x_1),...,f(x_n))'$ and $\boldsymbol{e} = (e_1,...,e_n)'$. To estimate the unknown function $f$, the standard procedure is to apply a discrete wavelet transform (DWT) on \eqref{modeltimevec}, represented by an orthogonal matrix $\boldsymbol{W}$, to obtain the following model in the wavelet domain,
\begin{equation} \label{modelvec}
\boldsymbol{d} = \boldsymbol{\theta} + \boldsymbol{\epsilon},
\end{equation}
where $\boldsymbol{d}=\boldsymbol{W}\boldsymbol{y}$, $\boldsymbol{\theta}=\boldsymbol{W}\boldsymbol{f}$ and $\boldsymbol{\epsilon}=\boldsymbol{W}\boldsymbol{e}$. For a specific coefficient of the vector $\boldsymbol{d}$, we have the simple model 
\begin{equation}\label{model}
d = \theta + \epsilon,
\end{equation}
\noindent where $d$ is the empirical wavelet coefficient, $\theta \in \mathbb{R}$ is the wavelet coefficient to be estimated and $\epsilon \sim N(0,\sigma^2)$ is the normal random error with unknown variance $\sigma^2$. Since the method works coefficient by coefficient, we extract the typical resolution level and location subindices of $d$, $\theta$ and $\epsilon$ for simplicity. Note that, according to the model \eqref{model}, $d|\theta \sim N(\theta,\sigma^2)$ and then, the problem of estimating a function $f$ becomes a normal location parameter estimation problem in the wavelet domain for each coefficient, with posterior estimation of $f$ by the inverse discrete wavelet transform (IDWT), i.e,  $\boldsymbol{\hat{f}} = \boldsymbol{W^{T}}\boldsymbol{\hat{\theta}}$. 

One of the main advantages of expanding an unknown function in wavelet basis is the typical sparsity of the vector of wavelet coefficients $\boldsymbol{\theta}$. In fact, the coefficients are nonzero or significant only in localizations of time domain where the function has features to be recovered, as discontinuities or peaks for example and most of the remaining coefficients are zero or very close to zero. In this sense, a good estimator of $\boldsymbol{\theta}$ should take this sparsity feature into account. Most of the classical procedures usually apply some kind of thresholding policy on the empirical coefficients $\boldsymbol{d}$, i.e, the empirical coefficient $d$ are shrunk to zero if it is less than some threshold value $\lambda$, $\lambda>0$.

For bayesian estimation of $\theta$, it is possible to consider the sparsity of $\boldsymbol{\theta}$ by proposing a discrete mixture of a point mass function at zero and a continuous distribution, attributing most of the weight on the point mass at zero. We propose in this work the following discrete mixture distribution for $\theta$,
\begin{equation}\label{prior}
\pi(\theta;\alpha,a,b,m) = \alpha \delta_{0}(\theta) + (1-\alpha)g(\theta;a,b,m),
\end{equation}
where $\alpha >0$, $\delta_{0}(\cdot)$ is the point mass function at zero and $g(\cdot;a,b,m)$ is the beta density function on $[-m,m]$,
\begin{equation}\label{beta}
g(\theta;a,b,m) = \frac{(\theta+m)^{a-1}(m-\theta)^{b-1}}{(2m)^{a+b-1}B(a,b)}\mathbb{I}_{[-m,m]}(\theta),
\end{equation}
for $a,b,m > 0$, $a \neq b$ (asymmetric case), $B(\cdot,\cdot)$ the standard beta function and $\mathbb{I}_{A}(\cdot)$ the usual indicator function on the set $A$. Thus the proposed prior distribution to the wavelet coefficients has $\alpha, a, b$ and $m$ as hyperparameters to be elicited. We will see later that $\alpha, a$ and $b$ values impact directly on the shrinkage level of the estimator. 

According to the prior models \eqref{prior} and \eqref{beta}, we have that the prior expected value $\mathbb{E}_{\pi}(\theta)$, variance $Var_{\pi}(\theta)$ of $\theta$ are given respectively by
\begin{equation}
\mathbb{E}_{\pi}(\theta) = \frac{m(1-\alpha)(a-b)}{a+b},
\end{equation}
\begin{equation}
Var_{\pi}(\theta) = \frac{(1-\alpha)m^2}{(a+b)^2}\left[\frac{4ab}{a+b+1}+\alpha (a+b)^2\right],
\end{equation}
and if $X$ is a random variable with density \eqref{beta}, then its Pearson skewness coefficient, which is the third standardized moment, is given by
\begin{equation}
Skew(X) = \frac{2(b-a)\sqrt{a+b+1}}{(a+b+2)\sqrt{ab}}.
\end{equation}
First of all, we can note that the proposed prior does not have zero mean wavelet coefficients.  Actually, it only occurs when $a = b$, the symmetric context, which is not considered on this work. Further, the Pearson's skewness coefficient tells us that when $a>b$, $Skew_{\pi}(\theta) < 0$ and the prior is left assymetric. When $a<b$, $Skew_{\pi}(\theta) > 0$ and the prior is right asymmetric. Moreover, the asymmetry increases as $|a-b|$ increases. The symmetry occurs when $a=b$ and then, for this work, we consider only the cases $a\neq b$, i.e, the asymmetric setup. Figure \ref{fig:betadist} shows some beta density functions for $a \neq b$, i.e, for asymmetric cases. Note that there is symmetry around $x=0$ between densities of interchangeable values of parameters $a$ and $b$. We will see in the next section the impact on the shrinkage rules and their statistical properties according to the skewness of the prior distribution of $\theta$.

\begin{figure}[H]
\centering
\includegraphics[scale=0.70]{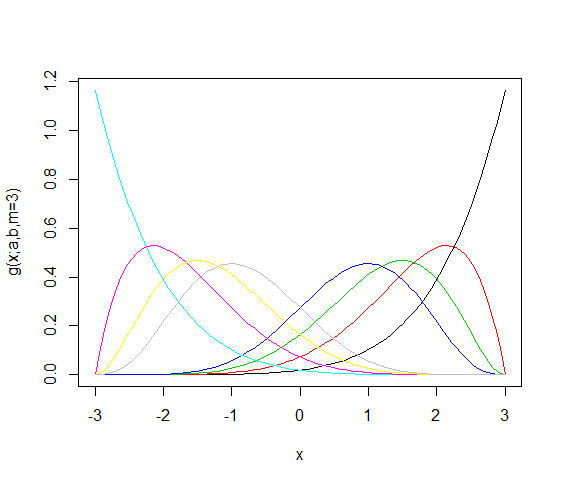}
\caption{Beta density functions \eqref{beta} for $a \neq b$ and $m=3$. When $a>b$, beta density is left assymetric and when $a<b$, it is right asymmetric.}\label{fig:betadist}
\end{figure}
 
\section{The shrinkage rule and statistical properties}
The model \eqref{model}, \eqref{prior} and \eqref{beta} allows us to obtain the bayesian shrinkage rule $\delta(\cdot)$, which is the Bayes estimator of $\theta$ based on the empirical wavelet coefficient $d$. It is well known that under the quadratic loss function $L(\delta,\theta) = (\delta - \theta)^2$, the Bayes estimator is the posterior expected value of $\theta$, i.e, $\delta(d) = \mathbb{E}_{\pi}(\theta|d)$. Proposition \ref{prop1} gives the specific expression of the shrinkage rule under model  
\eqref{model}, a general prior distribution of the form  $\pi(\theta;\alpha,m,\boldsymbol{\tau}) = \alpha \delta_{0}(\theta) + (1-\alpha)g(\theta;\boldsymbol{\tau})$ and for a density function $g$ with support in $[-m,m]$. 

\begin{prop} \label{prop1}
Consider the location parameter $\theta$ estimation problem \eqref{model}. If the prior distribution of $\theta$ is of the form $\pi(\theta;\alpha,m,\boldsymbol{\tau}) = \alpha \delta_{0}(\theta) + (1-\alpha)g(\theta;\boldsymbol{\tau})$, where $g$ is a density function with support in $[-m,m]$ and parameters $\boldsymbol{\tau}$, then the shrinkage rule under the quadratic loss function is given by
\begin{equation}
\delta(d) = \frac{(1-\alpha)\int_{\frac{-m-d}{\sigma}}^{\frac{m-d}{\sigma}}(\sigma u + d)g(\sigma u + d;\boldsymbol{\tau})\phi(u)du}{\alpha \frac{1}{\sigma}\phi(\frac{d}{\sigma})+(1-\alpha)\int_{\frac{-m-d}{\sigma}}^{\frac{m-d}{\sigma}}g(\sigma u + d;\boldsymbol{\tau})\phi(u)du}
\end{equation}
where $\phi(\cdot)$ is the standard normal density function.
\end{prop}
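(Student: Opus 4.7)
The plan is to start from the definition of the Bayes estimator under squared-error loss, namely $\delta(d) = \mathbb{E}_{\pi}(\theta\mid d)$, and write it as the ratio
\[
\delta(d) = \frac{\int_{\mathbb{R}} \theta\, f(d\mid\theta)\,\pi(\theta)\,d\theta}{\int_{\mathbb{R}} f(d\mid\theta)\,\pi(\theta)\,d\theta},
\]
where $f(d\mid\theta)=\frac{1}{\sigma}\phi\!\left(\frac{d-\theta}{\sigma}\right)$ is the $N(\theta,\sigma^{2})$ density evaluated at the observed $d$, as dictated by the model in \eqref{model}.

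Next I would substitute the mixture prior $\pi(\theta;\alpha,m,\boldsymbol{\tau})=\alpha\delta_{0}(\theta)+(1-\alpha)g(\theta;\boldsymbol{\tau})$ and split each of the two integrals into its point-mass part and its absolutely continuous part. Using the sifting property of $\delta_{0}$, the point-mass contribution to the denominator is
\[
\alpha\int_{\mathbb{R}} f(d\mid\theta)\,\delta_{0}(\theta)\,d\theta = \alpha\,f(d\mid 0) = \frac{\alpha}{\sigma}\phi\!\left(\tfrac{d}{\sigma}\right),
\]
while the analogous contribution to the numerator vanishes because the integrand carries the factor $\theta$ and is evaluated at $\theta=0$. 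The remaining continuous parts are supported on $[-m,m]$ because of the support of $g$, so both reduce to integrals over that interval.

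Finally I would perform the affine change of variable $u=(\theta-d)/\sigma$, equivalently $\theta=\sigma u+d$, $d\theta=\sigma\,du$, in the two continuous integrals. The endpoints $\theta=\pm m$ map to $u=\frac{\pm m-d}{\sigma}$, giving exactly the limits in the statement. Inside, one uses the symmetry $\phi(-u)=\phi(u)$ to write $\frac{1}{\sigma}\phi\!\left(\frac{d-\theta}{\sigma}\right)d\theta = \phi(u)\,du$, so the numerator integrand becomes $(\sigma u+d)\,g(\sigma u+d;\boldsymbol{\tau})\,\phi(u)$ and the denominator integrand becomes $g(\sigma u+d;\boldsymbol{\tau})\,\phi(u)$. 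Assembling the pieces yields the claimed formula.

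The steps are essentially mechanical; the only points requiring care are the correct handling of the point mass (making sure it disappears from the numerator but not the denominator) and keeping track of the Jacobian $\sigma$ and the symmetry of $\phi$ so that no spurious $1/\sigma$ survives the change of variable. I do not expect a genuine obstacle beyond this bookkeeping.
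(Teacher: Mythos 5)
Your proposal is correct and follows essentially the same route as the paper's proof: write the posterior mean as a ratio of integrals, use the sifting property of the point mass (which survives only in the denominator), and apply the substitution $u=(\theta-d)/\sigma$ together with the symmetry of $\phi$ to reach the stated limits and integrands. No gaps.
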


\begin{proof}
If $\mathcal{L}(\cdot \mid \theta)$ is the likelihood function, we have that
\begin{align*}
\delta(d) &= \mathbb{E}_{\pi}(\theta \mid d) \\
          &=\frac{\int_{\Theta}\theta[\alpha\delta_{0}(\theta)+(1-\alpha)g(\theta;\boldsymbol{\tau})]\mathcal{L}(d \mid \theta)d\theta}{\int_{\Theta}[\alpha\delta_{0}(\theta)+(1-\alpha)g(\theta;\boldsymbol{\tau})]\mathcal{L}(d \mid \theta)d\theta} \\
          &= \frac{(1-\alpha)\int_{-m}^{m}\theta g(\theta;\boldsymbol{\tau})\frac{1}{\sqrt{2\pi}}\exp\{-\frac{1}{2}(\frac{d-\theta}{\sigma})^2\}\frac{d\theta}{\sigma}}{\alpha \frac{1}{\sigma\sqrt{2\pi}}\exp\{-\frac{1}{2}(\frac{d}{\sigma})^2\}+(1-\alpha)\int_{-m}^{m}g(\theta;\boldsymbol{\tau})\frac{1}{\sqrt{2\pi}}\exp\{-\frac{1}{2}(\frac{d-\theta}{\sigma})^2\}\frac{d\theta}{\sigma}}\\
          &= \frac{(1-\alpha)\int_{\frac{-m-d}{\sigma}}^{\frac{m-d}{\sigma}}(\sigma u + d)g(\sigma u + d;\boldsymbol{\tau})\phi(u)du}{\alpha \frac{1}{\sigma}\phi(\frac{d}{\sigma})+(1-\alpha)\int_{\frac{-m-d}{\sigma}}^{\frac{m-d}{\sigma}}g(\sigma u + d;\boldsymbol{\tau})\phi(u)du}.\\
\end{align*}
\end{proof}
   
Thus we can apply the Proposition \ref{prop1} to the specific beta density function \eqref{beta} with $\boldsymbol{\tau} = (a,b)'$ to obtain numerically the shrinkage rule of the proposed model. Figure \ref{fig:shrink} presents the shrinkage rules for $\alpha=0.9$, $m=3$ and (a) left asymmetric case $a>b$, for $a=7$ and $b \in \{1,2,3,4,5,6\}$ and (b) right asymmetric case $a<b$, for $b=7$ and $a \in \{1,2,3,4,5,6\}$. First of all, we can observe a symmetric behavior of the rules relative to the origin of the cartesian system for interchangeable choices of $a$ and $b$, i.e, if $\delta_{a,b}(d) = \delta(d)$ for choices of hyperparameters $a$ and $b$, then $\delta_{a,b}(d) = -\delta_{b,a}(-d)$. For example, $\delta_{7,1}(d) = -\delta_{1,7}(-d)$. This feature is a consequence of the symmetry around zero between the respective densites. Thus, we describe the shrinkage rule and its features just for the left asymmetric case ($a>b$), once the same properties occur symmetrically for the right asymmetric distributional context. 

\begin{figure}[H]
\centering
\subfigure[Left asymmetric case $a>b$.]{
\includegraphics[scale=0.55]{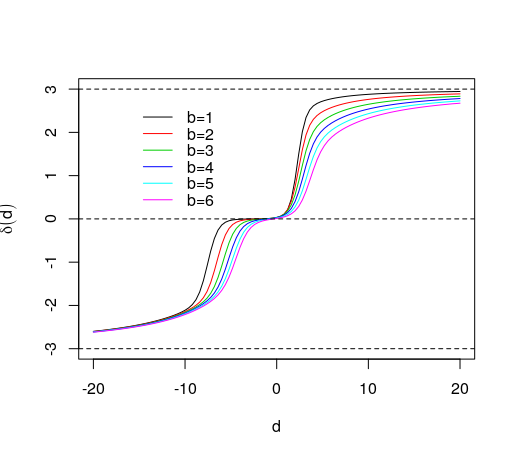}}
\subfigure[Right asymmetric case $a<b$.]{
\includegraphics[scale=0.55]{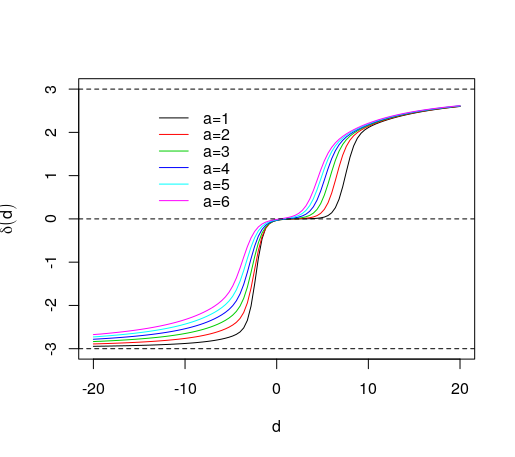}}
\caption{Shrinkage rules under beta prior for $\alpha=0.9$, $m=3$ and (a) left asymmetric case ($a>b$), $a=7$ and (b) right asymmetric case ($a<b$), $b=7$.} \label{fig:shrink}
\end{figure}

Figure \ref{fig:shrink} (a) shows, as expected, that the shrinkage rules under left asymmetric beta prior perform shrinkage asymetrically around zero. In fact, negative empirical coefficients are shrunk more than the positive ones. Moreover, the shrinkage level increases as $|a-b|$ increases, once the interval size of $d$-values that are shrunk to zero is higher. Another property usually taken by bayesian shrinkage rules for bounded wavelet coefficients is that they are also bounded by $[-m,m]$. Since $\theta$ is bounded by $[-m,m]$, empirical coefficients $d$ occur with absolute values higuer than $m$ due noise effect, then they are shrunk to $m$ at most. Figure \ref{fig:bias} (a) presents the impact of the hyperparameter $\alpha$ on shrinkage level of the rules, for $\alpha \in \{.6, .7, .8, .9, .99\}$, $a=7$, $b=3$ and $m=3$. As expected, the rule shrinks more as $\alpha$ increases, since this set more weight to the point mass at zero function in the prior model \eqref{prior}. 

\begin{figure}[H]
\centering
\subfigure[Shrinkage rules for several $\alpha$ values.]{
\includegraphics[scale=0.50]{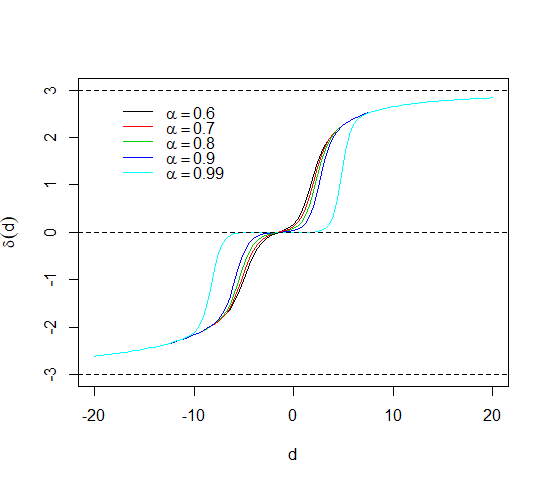}}
\subfigure[Squared bias for several $b$ values.]{
\includegraphics[scale=0.55]{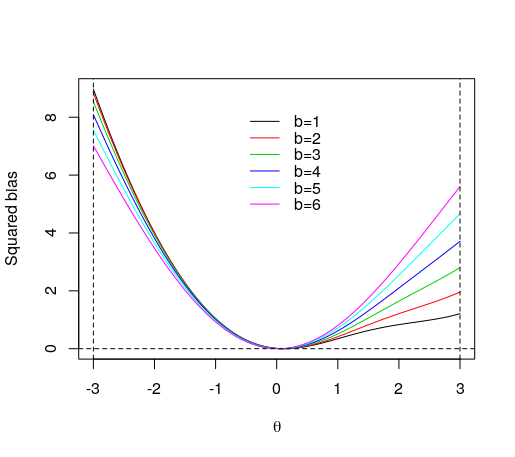}}
\caption{Shrinkage rules under beta prior for $a=7$, $b=3$ and $m=3$ for several values of $\alpha$ (a) and squared bias for shrinkage rules for $\alpha=0.9$, $m=3$ and left asymmetric case ($a>b$) with $a=7$ (b).  } \label{fig:bias}
\end{figure}

Squared bias and variance of the shrinkage rules for left asymmetric case ($a>b$) are provided in Figures \ref{fig:bias} (b) and \ref{fig:varrisk} (a) respectively. The estimators are practically unbiased and achieve minimum variance when $\theta$ is close to zero (but not for $\theta=0$). These features also exist when symmetric priors are assumed for $\theta$. However, these properties behave differently for negative and positive $\theta$ values. The bias increases faster for negative $\theta$ values than for positive ones while the variance increases faster for positive values. Since the shrinkage is stronger for negative values of $d$, it is reasonable the asymmetrical increase of the bias on the negative values direction with the simultaneously decreasing variance toward it. 

\begin{figure}[H]
\centering
\subfigure[Variances]{
\includegraphics[scale=0.55]{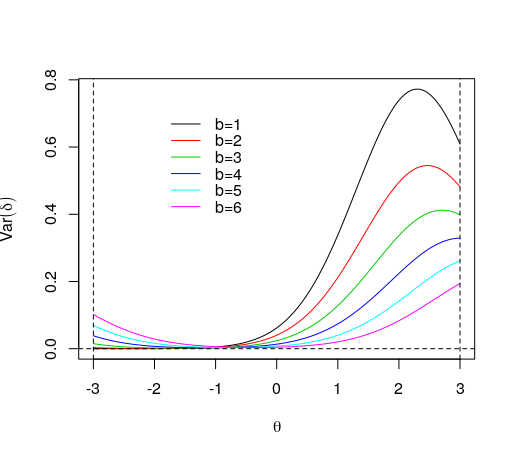}}
\subfigure[Classical risks]{
\includegraphics[scale=0.55]{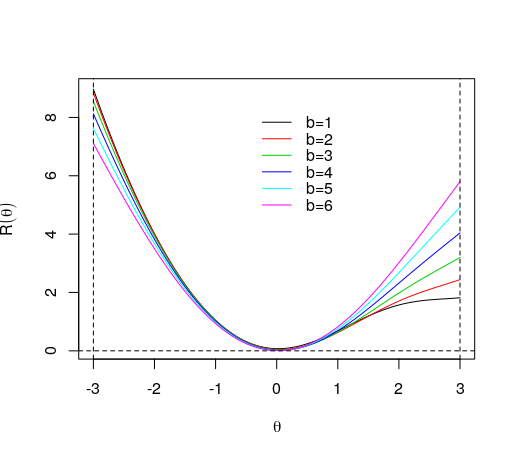}}
\caption{Variances and classical risks for shrinkage rules under beta prior for $\alpha=0.9$, $m=3$ and left asymmetric case ($a>b$) with $a=7$.} \label{fig:varrisk}
\end{figure}

Figure \ref{fig:varrisk} (b) and Table \ref{tab:brisk2} show the classical risks $R_{\delta}(\theta)$ and Bayes risks $r_{\delta}$ respectively for the same rules considered on the plots of squared bias and variance. In fact, the behavior of the classical risks is the same as the squared bias one, i.e, there is a faster increase of the risk for negative $\theta$ values than for positive values, with the minimum risk close to zero. Moreover, we observe that the Bayes risk decreases as the hyperparameter $b$ increases and goes to the symmetric case. Further, Bayes risks for several $\alpha$ values are presented in Table \ref{tab:brisks1}. As expected, the bayesian risk decreases as $\alpha$ increases, once this last one imply a higuer shrinkage and agree with the prior belief of sparsity of the $\boldsymbol{\theta}$ vector.

\begin{table}[!htb]
\centering
\label{my-label}
\begin{tabular}{cccccccc}
\hline
$b$ & 1   & 2 & 3 & 4 & 5 & 6     \\ \hline
$r_{\delta}$ & 0.221 & 0.182 & 0.139 & 0.103 & 0.078 & 0.063 \\ \hline
\end{tabular}
\caption{Bayes risks of the shrinkage rules under beta prior distribution with hyperparameters $\alpha = 0.9$, $m=3$ and $a=7$.}\label{tab:brisk2}
\end{table}

\begin{table}[!htb]
\centering
\label{my-label}
\begin{tabular}{ccccccc}
\hline
$\alpha$ & 0.6   & 0.7 & 0.8 & 0.9 & 0.99     \\ \hline
$r_{\delta}$ & 0.352 & 0.299 & 0.231 & 0.139 & 0.019 \\ \hline
\end{tabular}
\caption{Bayes risks of the shrinkage rules under beta prior distribution with hyperparameters $m=3$, $a=7$ and $b=3$.}\label{tab:brisks1}
\end{table}

\section{Parameters elicitation}

Methods and criteria for determination of the involved parameters and hyperparameters to estimate the coefficients are important in bayesian procedures. In the framework of model \eqref{model}, \eqref{prior} and \eqref{beta}, the choices of the $\sigma$ parameter of the normal random error distribution and the hyperparameters $\alpha$, $m$, $a$ and $b$ of the beta prior distribution of the wavelet coefficient are required. We present the methods and criteria already available in the literature for such choices and used in simulation and application studies and some direction on elicitation of the beta shape parameters.

Based on the fact that much of the noise information present in the data can be obtained on the finer resolution scale, for the robust $\sigma$ estimation, Donoho and Johnstone (1994a) suggest

\begin{equation}\label{eq:sigma}
\hat{\sigma} = \frac{\mbox{median}\{|d_{J-1,k}|:k=0,...,2^{J-1}\}}{0.6745}.
\end{equation}

The hyperparameters $\alpha$ and $m$ are the weight of the point mass function at zero of the proposed prior and the upper value of the beta support respectively. Angelini and Vidakovic (2004) suggest the hyperparameters $\alpha$ and $m$ be dependent on the level of resolution $j$ according to the expressions
\begin{equation}\label{eq:alpha}
\alpha = \alpha(j) = 1 - \frac{1}{(j-J_{0}+1)^\gamma}
\end{equation}
and
\begin{equation}\label{eq:m}
m = m(j) = \max_{k}\{|d_{jk}|\},
\end{equation}
where $J_ 0 \leq j \leq J-1$, $J_0$ is the primary resolution level and $\gamma > 0$. They also suggest that in the absence of additional information, $\gamma = 2$ can be adopted.

Finally, the shape hyperparameters $a$ and $b$ should be chosen according to asymmetry and shrinkage levels criteria. Left asymmetry  imposes $a>b$ and right one says $a<b$. As $|a-b|$ increases, the asymmetry level and shrinkage in the direction of this asymmetry increase. In practice, assuming symmetric around zero noise, as the considered model \eqref{model}, the observed asymmetry of the empirical coefficients $d$ can provide some information about wavelet coefficients asymmetry criteria and be a starting point to elicite it.

\section{Simulation studies}
Two simulation studies were done to evaluate the performance of the proposed shrinkage rule and to compare it with well known shrinkage/thresholding methods. The first one (simulation study 1) had empirical coefficients vector $\boldsymbol{d}$ artificially generated according to the models \eqref{model}, \eqref{prior} and \eqref{beta} and the second one (simulation study 2) involved Donoho-Johnstone test functions, which are extremely applied in the literature to compare wavelet-based methods. 

In both simulation studies, our proposed shrinkage rule performance were compared with soft thresholding  with threshold parameter chosen according to the following policies: universal thresholding (UNIV) proposed by Donoho and Johnstone (1994b), false discovery rate (FDR) proposed by Abramovich and Benjamini (1996), cross validation (CV) of Nason (1996) and Stein unbiased risk estimator (SURE) of Donoho and Johnstone (1995). Moreover, we also compared with bayesian shrinkage methods: bayesian adaptive multiresolution shrinker (BAMS) of Vidakovic and Ruggeri (2001) and large posterior mode (LPM) of Cutillo et al. (2008).

We used the mean squared error (MSE), $MSE = \frac{1}{n} \sum_{i=1}^{n}[{\hat f(x_i)} - f(x_i)]^2$ as performance measure of the shrinkage rules on each run of the simulation. For each function, the simulation was repeated $M$ times and a comparison measure, the average of the obtained MSEs, $AMSE = \frac{1}{M} \sum_{j=1}^{M}MSE_j$, was calculated. Thus, the best method in terms of averaged mean squared error is the one with the smallest value of AMSE.  

In both simulation studies, normal random noise vectors were generated according to three signal to noise ratio values (SNR), 3, 6 and 9, two sample sizes were considered, $n=512$ and $2048$ and wavelet basis Daubechies with eight null moments (Daub8) was applied.

\subsection{Simulation study 1}

To evaluate the performance of the proposed shrinkage rule in asymmetric distributed wavelet coefficients, we generated the wavelet coefficients $\boldsymbol{\theta}$ according to the models \eqref{prior} and \eqref{beta} in two context of asymmetry controlled by the choice of the hyperparameters. The first one for $a=3$, $b=7$, $m=10$ and $\alpha=0.9$, which give $\mathbb{E}_{\pi}(\theta) = -0.40$, $Var_{\pi}(\theta) = 2.20$ and $Skew(X) = 0.48$. Thus, this context has a moderate asymmetry and a large amount of null coefficients due the impact of the weight $\alpha$ of the point mass function at zero. The second context one for more extreme choices, $a=1$, $b=20$, $m=30$ and $\alpha=0.6$, which give $\mathbb{E}_{\pi}(\theta) =10.86$, $Var_{\pi}(\theta) = 179.78$ and $Skew(X) = 1.73$. This last one introduces more asymmetry and less null coefficients. Tables \ref{tab:sim11} and \ref{tab:sim12} show the AMSEs for first and second contexts respectively obtained for $M=1000$ simulation runs in each scenario of sample size and SNR. 

\begin{table}[H]
\scalefont{0.5}
\centering
\label{my-label}
\begin{tabular}{|c|c|c|c|c|||c|c|c|c|c|}
\hline
 
 n & Method & SNR=3 & SNR=6 & SNR=9 & n & Method & SNR=3 &SNR=6 &SNR=9  \\ \hline \hline
512&	UNIV	&	 0.439	&	0.119	&	0.054	& 2048	&	UNIV	&	0.310	&	0.087	&	0.040 \\
&	FDR	&	 0.445 	&	0.129	&	0.067	&	&	FDR	&	0.237	&	0.065	&	0.030 \\
&	CV	&	 0.634 	&	0.177	&	0.082	&	&	CV	&	0.404	&	0.116	&	0.054 \\
&	SURE	&	 0.122 	&	0.031	&	0.014	&	&	SURE	&	0.078	&	0.020	&	0.009 \\
&	BAMS	&	 0.402 	&	0.375	&	0.371	&	&	BAMS	&	0.237	&	0.218	&	0.215 \\
&	LPM	&	 0.288 	&	0.072	&	0.032	&	&	LPM	&	0.240	&	0.060	&	0.026 \\
&	BETA	& \textbf{0.058} 	&	\textbf{0.012}	&	\textbf{0.004}	&	&	BETA	&	\textbf{0.037}	&	\textbf{0.008}	&	\textbf{0.003} \\ \hline

\end{tabular}
\caption{AMSE of the shrinkage/thresholding rules in the simulation study for the empirical wavelet coefficients vector artificially generated according to the models \eqref{model}, \eqref{prior} and \eqref{beta} for $\alpha = 0.9$, $a=3$, $b=7$ and $m=10$.}\label{tab:sim11}
\end{table}

\begin{table}[H]
\scalefont{0.5}
\centering
\label{my-label}
\begin{tabular}{|c|c|c|c|c|||c|c|c|c|c|}
\hline
 
 n & Method & SNR=3 & SNR=6 & SNR=9 & n & Method & SNR=3 &SNR=6 &SNR=9  \\ \hline \hline
512&	UNIV	&	8.316	&	2.082	&	0.922	&2048	&	UNIV	&	12.204	&	3.058	&	0.922 \\
&	FDR	&	14.350	&	27.230	&	26.385	&	&	FDR	&	9.333	&	2.838	&	26.385 \\
&	CV	&	11.456	&	2.854	&	1.261	&	&	CV	&	17.047	&	4.279	&	1.261 \\
&	SURE	&	2.345	&	0.589	&	0.260	&	&	SURE	&	2.680	&	0.672	&	0.260 \\
&	BAMS	&	2.097	&	0.401	&	0.256	&	&	BAMS	&	2.425	&	0.449	&	0.256 \\
&	LPM	&	6.933	&	1.733	&	0.770	&	&	LPM	&	7.464	&	1.867	&	0.770 \\
&	BETA	& \textbf{0.307}	&	\textbf{0.111}	&\textbf{0.057}	&	&	BETA	&	\textbf{0.351}	&	\textbf{0.120}	&	\textbf{0.057} \\ \hline

\end{tabular}
\caption{AMSE of the shrinkage/thresholding rules in the simulation study for the empirical wavelet coefficients vector artificially generated according to the models \eqref{model}, \eqref{prior} and \eqref{beta} for $\alpha = 0.6$, $a=1$, $b=20$ and $m=30$.}\label{tab:sim12}
\end{table}

We can observe an excelent performance of the shrinkage rule under asymmetric prior in both contexts. In fact, our rule had the best performance in terms of AMSE in all the considered scenarios of sample size and SNR and both the contexts of hyperparameters choice. We emphasize the second context, with strong asymmetry and less sparsity degree of the coefficients, where the difference in performance of the rule with the others are significant. It suggests that when the wavelet coefficients are asymmetrically distributed, our proposed rule should be considered as the shrinker to be applied.

Another interesting observed feature is the good performance of the rule for low SNR values. When SNR=3, the beta rule had significant difference against the comparison methods, i.e, the shrinker can be well succeeded even for presence with high noise level in the data, which is a desirable for shrinkers.

Boxplots of MSEs obtained for the shrinkage/thresholding rules in both contexts for $n=512$ and SNR=3 are presented in Figures \ref{fig:bp1}(a) and (b). 

\begin{figure}[H]
\centering
\subfigure[Context 1]{
\includegraphics[scale=0.45]{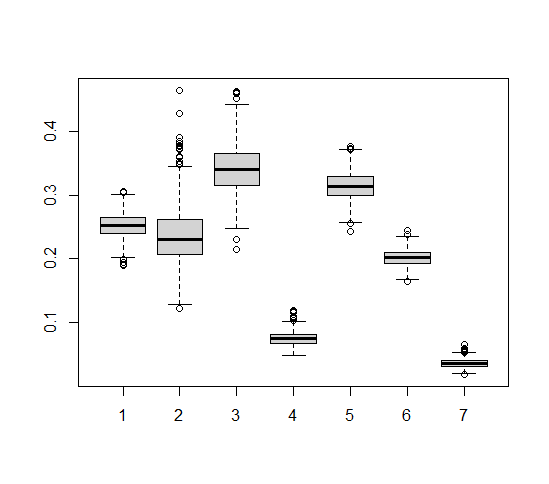}}
\subfigure[Context 2]{
\includegraphics[scale=0.45]{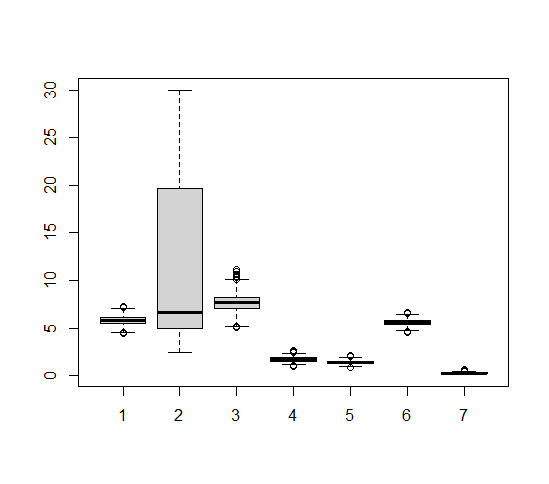}}
\caption{Boxplots of MSEs of shrinkage/thresholding rules in simulation study 1 involving artificial asymmetric distributed wavelet coefficients for $n=512$ and SNR=3 according to the models \eqref{model}, \eqref{prior} and \eqref{beta} for $\alpha = 0.9$, $a=3$, $b=7$ and $m=10$ (a) and $\alpha = 0.6$, $a=1$, $b=20$ and $m=30$ (b). The associated rules are: 1-Univ, 2-FDR, 3-CV, 4-SURE, 5-BAMS, 6-LPM and 7-Beta shrinkage rule.} \label{fig:bp1}
\end{figure}

\subsection{Simulation study 2}

To conclude our simulation studies, we evaluate the performance of the proposed shrinkage rule in the four Donoho-Jonhstone test functions called Bumps, Blocks, Doppler and Heavisine. Figure \ref{fig:DJ}(a) shows these extremely used functions in statistical wavelet research, once each of them has interesting features such as discontinuities, spikes and oscillations that are important to be captured in curve estimation procedures by wavelet modelling.

\begin{figure}[H]
\centering
\subfigure[Donoho-Johnstone test functions.]{
\includegraphics[scale=0.45]{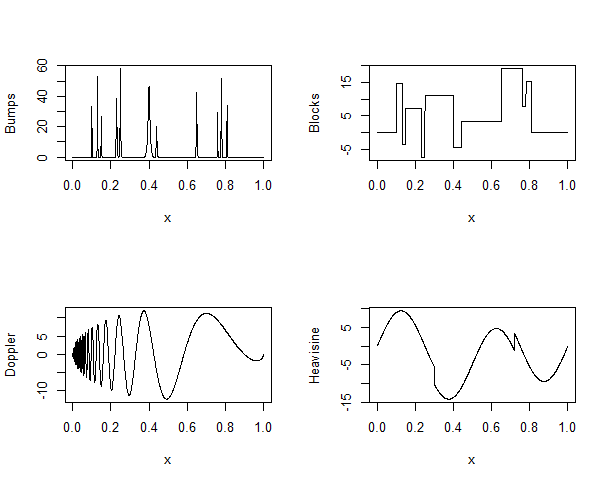}}
\subfigure[Fitted curves.]{
\includegraphics[scale=0.45]{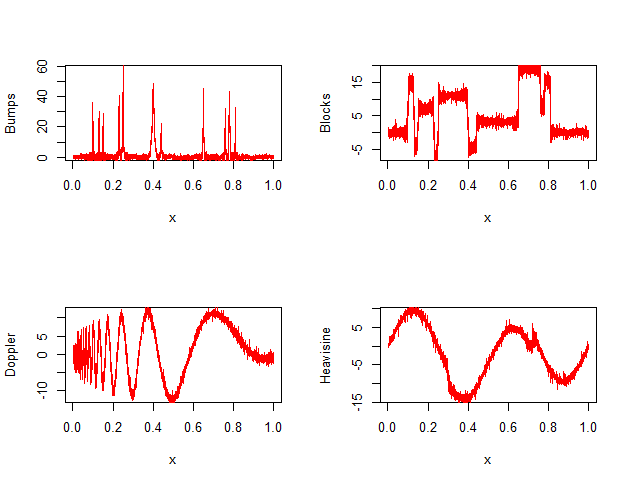}}
\caption{Donoho-Johnstone test functions used as underlying signals in simulation study 2 (a) and fitted curves obtained after application of asymmetric shrinkage rule under beta prior for $n=512$ and SNR=6.} \label{fig:DJ}
\end{figure}

In fact, the four functions present asymetrically distributed wavelet coefficients with different levels. Bumps and Doppler functions have the highest levels of wavelet coefficients asymmetry and the Heavisine, the shortest one, i.e, its coefficients are almost symmetric around zero. These previous knowledge allows us to explain the different performances of our asymmetric shrinker among the functions. 

In each function, data were generated according to the addition of normal random noise with the same scenarios of sample size and SNR of simulation study 1. The AMSEs obtained for each rule are in Tables 
\ref{tab:sim21} and \ref{tab:sim22}, for $M=500$ simulation runs for each scenario.

\begin{table}[H]
\scalefont{0.5}
\centering
\label{my-label}
\begin{tabular}{|c|c|c|c|c|c|||c|c|c|c|c|c|}
\hline
 
Signal & n & Method & SNR=3 & SNR=6 & SNR=9 & Signal & n & Method & SNR=3 &SNR=6 &SNR=9  \\ \hline \hline
Bumps&512	&	Univ	&	11.122	&	3.882	&	2.014	&Blocks	&512	&	Univ	&	6.940	&	2.846	&	1.524	\\
&	&	FDR	&	9.313	&	3.324	&	1.774	&	&	&	FDR	&	5.911	&	2.244	&	1.164	\\
&	&	CV	&	11.444	&	7.665	&	4.533	&	&	&	CV	&	2.575	&	1.000	&	0.673	\\
&	&	SURE	&	3.656	&	1.159	&	0.578	&	&	&	SURE	&	2.827	&\textbf{0.904}	&	0.448	\\
&	&	BAMS	& \textbf{2.833}	&	1.355	&	1.160	&	&	&	BAMS	&\textbf{2.469}	&	1.163	&	1.018	\\
&	&	LPM	&	5.441	&	1.363	&	0.606	&	&	&	LPM	&	5.469	&	1.363	&	0.605	\\
&	&	Beta	&	2.968	& \textbf{1.075}	&\textbf{0.548}	&	&	&	Beta	&	2.834	&	0.950	&\textbf{0.416}	\\ \hline

&2048	&	Univ	&	5.050	&	1.769	&	0.935	&	&2048	&	Univ	&	3.417	&	1.376	&	0.757	\\
&	&	FDR	&	3.582	&	1.173	&	0.602	&	&	&	FDR	&	2.688	&	0.967	&	0.513	\\
&	&	CV	&	1.610	&	0.578	&	0.381	&	&	&	CV	&	1.307	&	0.444	&	0.248	\\
&	&	SURE	&	1.651	& 0.510	&\textbf{0.250}	&	&	&	SURE	&	1.359	&\textbf{0.441}	&	\textbf{0.220}	\\
&	&	BAMS	&	1.635	&	0.573	&	0.482	&	&	&	BAMS	&	1.502	&	0.499	&	0.418	\\
&	&	LPM	&	5.453	&	1.359	&	0.604	&	&	&	LPM	&	5.450	&	1.361	&	0.605	\\
&	&	Beta	& \textbf{1.450}	&\textbf{0.465}	&0.298	&	&	&	Beta	&\textbf{1.306}	&	0.445	&	0.231	\\ \hline

\end{tabular}
\caption{AMSE of the shrinkage/thresholding rules in the simulation study for the empirical wavelet coefficients vector artificially generated with Donoho Jonhstone test functions Bumps and Blocks as underlying signals and additive normal random noise.}\label{tab:sim21}
\end{table}

\begin{table}[H]
\scalefont{0.5}
\centering
\label{my-label}
\begin{tabular}{|c|c|c|c|c|c|||c|c|c|c|c|c|}
\hline
 
Signal & n & Method & SNR=3 & SNR=6 & SNR=9 & Signal & n & Method & SNR=3 &SNR=6 &SNR=9  \\ \hline \hline
Doppler&512	&	Univ	&	2.645	&	1.102	&	0.608	&Heavisine	&512	&	Univ	&	0.570	&	0.348	&	0.237	\\
&	&	FDR	&	2.540	&	0.967	&	0.514	&	&	&	FDR	&	0.594	&	0.368	&	0.228	\\
&	&	CV	&	1.269	&	0.518	&	0.374	&	&	&	CV	&\textbf{0.511}	&\textbf{0.219}	&\textbf{0.122}	\\
&	&	SURE	&	1.318	&\textbf{0.435}	&	0.216	&	&	&	SURE	&	0.573	&	0.360	&	0.249	\\
&	&	BAMS	&	1.527	&	0.537	&	0.461	&	&	&	BAMS	&	1.142	&	0.258	&	0.204	\\
&	&	LPM	&	5.448	&	1.365	&	0.603	&	&	&	LPM	&	5.429	&	1.365	&	0.604	\\
&	&	Beta	&	\textbf{1.303}	&	0.469	&\textbf{0.212}	&	&	&	Beta	&	0.623	&	0.267	&	0.136	\\ \hline

&2048	&	Univ	&	1.155	&	0.457	&	0.257	&	&2048	&	Univ	&	0.358	&	0.193	&	0.123	\\
&	&	FDR	&	1.040	&	0.377	&	0.199	&	&	&	FDR	&	0.389	&	0.186	&	0.111	\\
&	&	CV	&	0.557	&	0.191	&	0.097	&	&	&	CV	&\textbf{0.264}	&\textbf{0.109}	&\textbf{0.060}	\\
&	&	SURE	&	0.573	&	0.194	&\textbf{0.097}	&	&	&	SURE	&	0.360	&	0.197	&	0.111	\\
&	&	BAMS	&	1.084	&	0.211	&	0.160	&	&	&	BAMS	&	0.982	&	0.143	&	0.100	\\
&	&	LPM	&	5.439	&	1.358	&	0.604	&	&	&	LPM	&	5.444	&	1.359	&	0.604	\\
&	&	Beta	&\textbf{0.468}	&\textbf{0.174}	&	0.103	&	&	&	Beta	&	0.548	&	0.795	&	0.326	\\ \hline

\end{tabular}
\caption{AMSE of the shrinkage/thresholding rules in the simulation study for the empirical wavelet coefficients vector artificially generated with Donoho Jonhstone test functions Doppler and Heavisine as underlying signals and additive normal random noise.}\label{tab:sim22}
\end{table}

We observe that the proposed shrinkage rule had great performance in practically all the scenarios. It was the best one in terms of AMSE in Bumps and Doppler functions and beat the comparison rules in some scenarios for Blocks function. Even for Heavisine function, that has the least asymmetric coefficients distribution, which was dominated by CV method, our beta rule had reasonable performance, with AMSE very close to the best methods on almost all the scenarios for this function. This feature should be emphasized: although the asymmetric shrinker had the best performance when applied in empirical wavelet coefficients of functions with significant asymmetrical wavelet coefficients distributions, which was already expected, it also had good results when the wavelet coefficients of the signal are practically symmetric. This provides some sort of flexibility of the proposed shrinker, which is essential for real data applications. Finally, as observed in simulation study 1, we also have good results for low SNR values, such as 3. 

Figures \ref{fig:DJ}(b) and \ref{fig:bpsim2} show fitted curves obtained after denoising by asymmetric shrinkage rule and boxplots of the MSEs of the shrinkage/thresholding rules respectively for $n=512$ and SNR=6. One can note that the fitted curves recover the main features of each signal, as jumps and spikes. Moreover, the MSEs of the proposed rule (rule number 7 at the boxplots) had low variation.

\begin{figure}[H]
\centering
\includegraphics[scale=0.70]{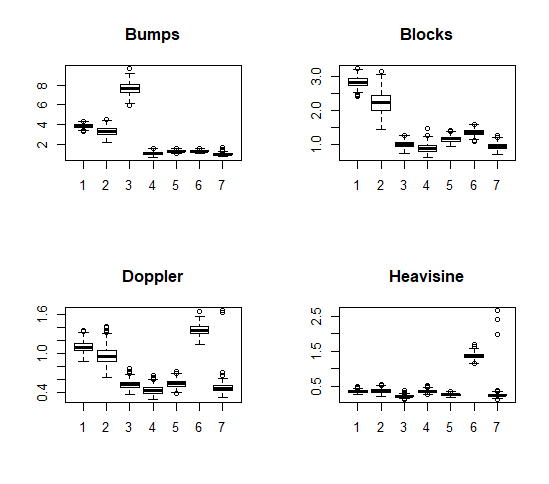}
\caption{Boxplots of MSE of shrinkage/thresholding rules in simulation study 2 involving Donoho-Johnstone test functions for $n=512$ and SNR=6. The associated rules are: 1-Univ, 2-FDR, 3-CV, 4-SURE, 5-BAMS, 6-LPM and 7-Beta shrinkage rule.}\label{fig:bpsim2}
\end{figure}

\section{Application: seismic dataset}

Coso, California, is a geothermal area, with geological strucute tectonically active and its geological and geophysics properties have been studied along the last decades by experts of the area. Once with a seismogram at hand, the experts classify the subsets of the data as primary waves (P-waves), secondary waves (S-waves) and body waves, according to some faetures of the signal, as amplitude, velocity and other geophysical parameters. The precision of the classification could be extremely important to predict events, such earthquakes. In this sense, denoising de seismogram with statistical methods is crucial for precision of such wave type classifications and naturally, wavelet based statistical methods are some of the most attractive and proposed methods for this purpose. Chik et al. (2009), To et al. (2009), Ansari et al. (2010), Beenamol et al. (2012, 2016), Mousavi et al. (2016) and Vargas and Veiga (2017) are some of relevant works related to wavelet based methods applied in denoising seismic data.  

We applied the proposed shrinkage rule in denoising seismic amplitudes dataset collected from Coso, California and available in RSEIS R package (Lees et al., 2020). The available dataset has 726 seismic amplitudes measured in a short time interval. We considered $J = \lfloor \log_{2}726 \rfloor = 9$ resolution levels, thus $n = 2^9 = 512$ data points. The considered seismogram is shown in Figure \ref{fig:seismic}(a). For more details of Coso geological studies and the dataset, the reader is addressed to Lees (2004).

After application of a DWT using Daub10 basis, we observed an empirical right asymmetry of the coefficients, with $\hat{Skew}(d) = 0,25$ and $\hat{\sigma}=298,38$. Then, our asymmetric shrinkage rule was applied for denoising the empirical coefficients, with hyperparameters $a=2$, $b=3$, $\alpha = \alpha(j)$ and $m = m(j)$ according to \eqref{eq:alpha} and \eqref{eq:m} respectively. The denoised seismogram is shown in Figure \ref{fig:seismic}(b). From comparison of figures in \ref{fig:seismic}, one can observe considered noise reduction, mainly on the final period of the seismogram, when the seismic activity becomes to decrease. 

\begin{figure}[H]
\centering
\subfigure[Seismic dataset]{
\includegraphics[scale=0.5]{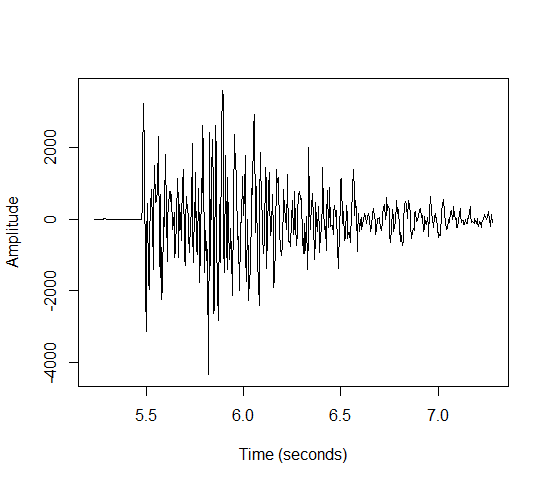}}
\subfigure[Denoised data]{
\includegraphics[scale=0.5]{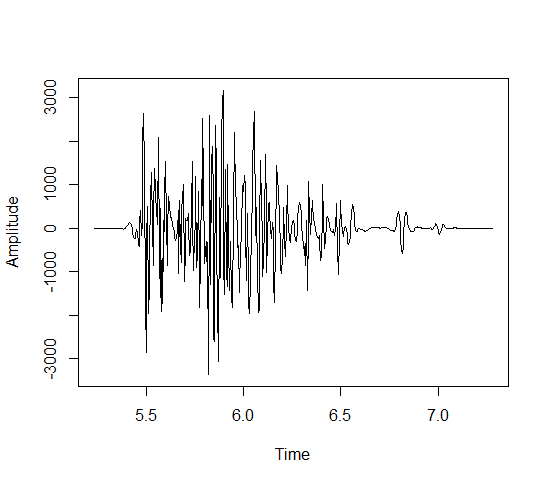}}
\caption{Considered seismogram from Coso, California (a) and its denoised version (b) after application of the proposed asymmetric shrinkage rule under beta prior.} \label{fig:seismic}
\end{figure}

Noise reduction can also be observed directly from the wavelet coefficients of the considered seismogram and its denoised version, shown in Figures \ref{fig:seismiccoef} (a) and (b) respectively. In fact, denoising occures mainly at high resolution levels, where most of non-zero magnitudes of the coefficients are tipically attributed to noise, as already mentioned in Section 5.  

\begin{figure}[H]
\centering
\subfigure[Empirical coefficients.]{
\includegraphics[scale=0.5]{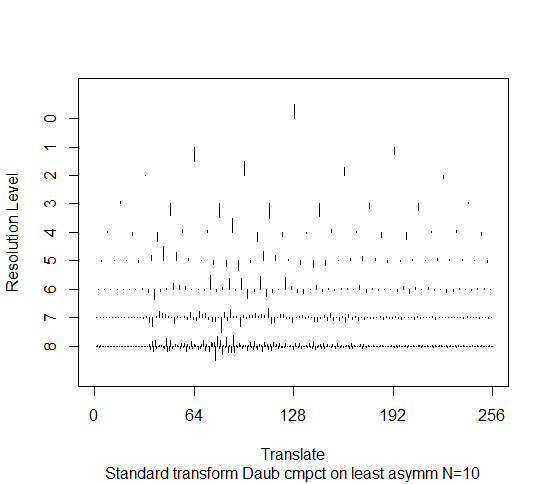}}
\subfigure[Shrunk coefficients.]{
\includegraphics[scale=0.5]{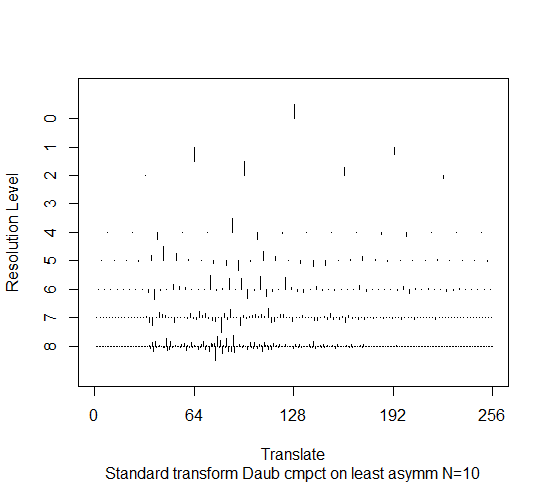}}
\caption{Empirical wavelet coefficients of the considered seismogram dataset (a) and their shrunk versions after denoising by the proposed asymmetric shrinkage rule under beta prior (b).} \label{fig:seismiccoef}
\end{figure}

Finally, we present the asymmetric shrinkage process in Figures \ref{fig:estcoef} (a) and (b). The first one shows us the plot of empirical coefficients against their shrunk versions ones. There is a weak right asymmetry around zero, i.e, empirical coefficients greater than zero had a little bit stronger shrinkage than the smaller ones, which is the effect of the chosen hyperparameters of the beta, $a=2$ and $b=3$. Although the asymmetry is not high, the adaptive asymmetric prior assigned according to this weak deviation of symmetry improved the denoising performance of the shrinkage rule. Figure \ref{fig:estcoef} (b) presents the histogram of shrunk coefficients, which emphasizes the sparsity of the estimated coefficients vector, with more than a half of the 512 coefficients shrunk to zero or very close to it.   

\begin{figure}[H]
\centering
\subfigure[Empirical against shrunk coefficients.]{
\includegraphics[scale=0.5]{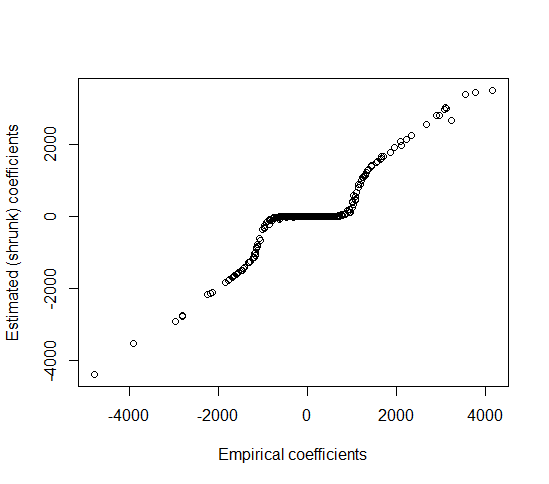}}
\subfigure[Histogram of shrunk coefficients.]{
\includegraphics[scale=0.5]{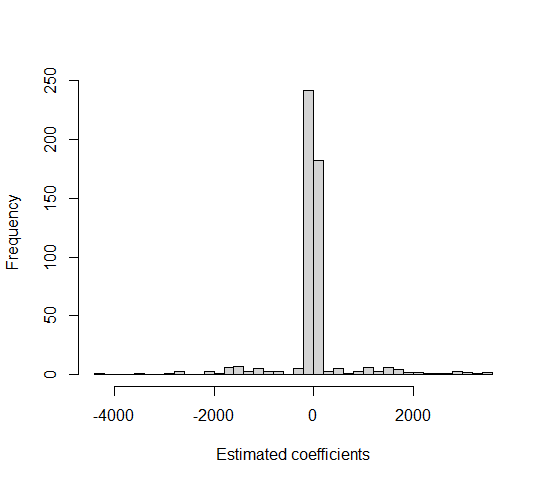}}
\caption{Empirical coefficients against their shrunk versions (a) and histogram of shrunk coefficients obtained by application of asymmetric shrinkage rule under beta prior.} \label{fig:estcoef}
\end{figure}

\section{Conclusion}
We propose in this work the use of asymmetric prior based on the beta distribution to the wavelet coefficients, which is a novelty in the wavelet shrinkage, since all the already proposed shrinkage methods are typically symmetric around zero. Moreover, few studies are concerned to bounded energy signals, which imply in bounded wavelet coefficients. In this sense, a bounded prior proposition, as the beta distribution, can be an alternative for applications in this context.   

The easy interpretation of the beta hyperparameters $a$ and $b$ in terms of asymmetry and shrinkage level and the well known flexibility of this distribution allow elicitation of the hyperparameters and adaptivity in modelling the coefficients, which are very attractive in bayesian setup. Further, the associated shrinkage rule had great performance in simulation studies and outperformed the considered shrinkage/thresholding methods in most of the scenarios, mainly when coefficients are highly asymmetric. Even when the coefficients are close to symmetry, the asymmetric shrinkage rule showed satisfactory results. These features allow the asymmetric beta to be considered by practitioners as a candidate to bayesian modelling of wavelet coefficients. 

The impact of wavelet basis choice, the proposition of other asymmetric distributions to wavelet coefficients and the evaluation of the proposed beta shrinkage rule in other performance measures are trivial possible extensions of this work.       

\section{Acknowledgement}
This study was financed by the Coordenação de Aperfeiçoamento de Pessoal de Nível Superior – Brasil (CAPES).

\end{document}